\documentclass[12pt]{article}

\usepackage{amsmath,amssymb,amsthm}
\usepackage{authblk}
\usepackage{graphicx}
\usepackage{xcolor}
\usepackage{enumitem}
\usepackage{multicol}
\usepackage{tikz}
\usepackage[colorlinks,citecolor=blue,linkcolor=blue]{hyperref}

\newtheorem{theorem}{Theorem}[section]
\newtheorem{lemma}[theorem]{Lemma}
\newtheorem{corollary}[theorem]{Corollary}

\newtheorem{definition}[theorem]{Definition}
\newtheorem{remark}[theorem]{Remark}
\newtheorem{condition}[theorem]{Condition}

\title{
High-Dimensional Signal Compression:
\\
Lattice Point Bounds and Metric Entropy}

\author[1]{Alex Iosevich \thanks{A. Iosevich was supported in part by NSF DMS-2154232}}
\author[2$\mathsection$]{Armen Vagharshakyan\thanks{A. Vagharshakyan was supported by the Higher Education and Science Committee of RA (Research Project No 24RL-1A028)}}
\author[3]{Emmett Wyman\thanks{E. Wyman was supported in part by NSF DMS-2422900}}

\affil[1]{\small{University of Rochester, Rochester, NY, USA}}
\affil[2]{
\small{Institute of Mathematics and
Yerevan State University, Yerevan, Armenia}
}
\affil[3]{\small{
Binghamton University, Binghamton, NY, USA
}}
\affil[$\mathsection$]{\small{Corresponding author: \textit {avaghars@kent.edu}}}

\begin{document}
\maketitle

\begin{abstract}
We study worst-case signal compression under an $\ell^2$ energy constraint,
with coordinate-dependent quantization precisions.
The compression problem is reduced to counting lattice points in a diagonal ellipsoid.
Under balanced precision profiles, we obtain explicit, dimension-dependent upper bounds on the logarithmic codebook size.
The analysis refines Landau's classical lattice point estimates using uniform Bessel bounds due to Olenko and explicit Abel summation.

\textbf{MSC 2020:} 94A29 (primary); 41A46, 11P21 (secondary)

\textbf{Keywords:} worst-case compression, quantization, lattice points, metric entropy, high-dimensional geometry, rate distortion
\end{abstract}

\section{Introduction}

High-dimensional compression and quantization problems arise throughout signal processing, information theory, and data science, particularly in settings where signals are subject to global energy constraints but individual coordinates may be represented with different accuracies.
Such variable-precision models occur naturally when different signal components carry unequal importance, are measured at different resolutions, or are constrained by heterogeneous hardware or sensing costs.

In this paper we study a deterministic, worst-case signal compression problem.
Given a signal $f:\{1,\dots,k\}\to\mathbb{R}$ satisfying an $\ell^2$ energy constraint
\[
\sum_{i=1}^k f(i)^2 \le R^2,
\]
and given coordinate-dependent quantization precisions $\varepsilon_1,\dots,\varepsilon_k>0$, we ask for an explicit upper bound on the number of distinct quantized representations that can arise in the worst case.
Equivalently, we seek a sharp bound on the logarithmic codebook size required to represent all signals obeying the energy constraint at the prescribed coordinate-wise accuracy.

Our analysis is fully nonasymptotic in the sense of probability: no distributional assumptions are made on the signal class, and no average-case arguments are used.
Instead, the problem is reduced to a geometric and number-theoretic question---counting integer lattice points in a high-dimensional ellipsoid determined by the quantization profile.
This reduction is elementary, but it allows tools from analytic number theory to be brought to bear on a compression problem.

From an information-theoretic perspective, the quantity $\ln \tau(k)$ represents a worst-case covering number, or metric entropy, for the class of admissible signals under coordinate-wise quantization.
It can be viewed as a deterministic analogue of rate--distortion complexity, specialized to worst-case guarantees.
Our objective is to make the dependence of this quantity on the dimension $k$ explicit, since this regime is central in modern high-dimensional applications.

The main result of the paper shows that, under a mild balanced-precision condition,
\[
\frac{1}{k}\ln \tau(k)
\le
\ln R - \ln\!\left(\frac{\varepsilon_{\mathrm{total}}}{k}\right)
- \frac12 \ln k + O(1),
\]
where $\varepsilon_{\mathrm{total}}=\sum_{i=1}^k \varepsilon_i$.
Alternatively, if $V$ denotes the volume of the associated ellipsoid, then 
\[
\tau(k)\leq e^{Ck}V,\]
where $C$ is an absolute constant.
The factor $e^{Ck}$ is tempered in regimes where the growth of $k$ is constrained by $R.$ 

We emphasize that the bound above is obtained in a regime where the dimension $k$ and the energy parameter $R$ are coupled, and under a mild balanced-precision assumption on the quantization profile. These hypotheses are natural in the worst-case compression setting considered here: without a comparability condition on the $\varepsilon_i$, the lattice count can grow arbitrarily large even when the total precision $\varepsilon_{\mathrm{total}}$ is fixed, and without a restriction relating $k$ and $R$, no dimension-explicit control of the entropy is possible.

From this perspective, the main contribution of this paper is a dimension-explicit entropy bound for ellipsoidal quantization under structural assumptions that prevent such degeneracies. The proof combines classical lattice-point methods with uniform analytic estimates in a way that makes the dependence on $k$ completely explicit.

Technically, the proof develops a dimension-explicit refinement of a classical lattice point estimate of Landau, in which all error terms are tracked as functions of the ambient dimension. This is achieved by combining Landau's difference-operator framework with uniform bounds on Bessel functions due to Olenko and a quantified Abel summation argument. The resulting estimate yields a sharp upper bound on the logarithmic codebook size, with explicit dependence on $k$ that is not visible in the classical asymptotic theory.

The present work is part of a broader effort to understand how geometric constraints govern compressibility and complexity in high-dimensional signal models.
In related work, effective dimension is captured through analytic invariants such as the Fourier ratio of coefficient vectors, leading to deterministic bounds on recovery, localization, and rate--distortion complexity.
Although the technical tools differ, both approaches reflect the same guiding principle: compressibility is controlled not by ambient dimension alone, but by geometric structure---manifesting here through lattice-point geometry induced by energy and quantization constraints.

\subsection*{Organization}

Section II formulates the compression problem and reduces it to counting lattice points in a diagonal ellipsoid.
Section III recalls Landau's difference formula and introduces the necessary notation.
Sections IV--VI develop the dimension-explicit estimates using Abel summation and uniform Bessel bounds, culminating in the proof of the main theorem.

\section{Compression}
\label{section:compression}

\subsection{The class $\mathcal{X}$}
\label{subsection:classX}

Fix $k\in\mathbb{N}$ and $R\geq 1$. Let $\mathcal{X}$ be the class of 
real-valued signals
\[
f:\{1,2,\dots,k\}\to\mathbb{R}
\]
satisfying the energy constraint
\begin{equation}\label{variation}
\sum_{i=1}^k f(i)^2\leq R^2.
\end{equation}
If $g$ is a discrete-time signal rooted at $0$, meaning $g(0)=0$, then one may take $f(i)=g(i+1)-g(i)$ as a forward difference. 

Working with forward differences rather than the values themselves is a classical idea in approximation theory and numerical analysis.

The bounded energy condition \eqref{variation} can be interpreted as bounded mean-square variation of the underlying signal.

\subsection{Problem setting}

Suppose we are given a family of functions $\mathcal{C}\subset\mathcal{X}$. We assume no a priori information on $\mathcal{C}$ beyond $\mathcal{C}\subset\mathcal{X}$. We also fix a sequence of precisions $\varepsilon_1,\dots,\varepsilon_k>0$. The dependence on $i$ allows certain coordinates to be recorded more accurately than others.

\subsection{The compression procedure}

For each $f\in\mathcal{C}$ and each $i$ we record $f(i)$ with precision $\varepsilon_i$ by rounding down to an integer multiple of $\varepsilon_i$. Define
\begin{align}\label{floor}
u_i=\left\lfloor \frac{f(i)}{\varepsilon_i}\right\rfloor,\text{ if }f(i)\geq 0,
\\\nonumber
u_i=\left\lceil \frac{f(i)}{\varepsilon_i}\right\rceil,\text{ if }f(i)< 0.
\end{align}
This produces an integer vector $(u_1,\dots,u_k)\in\mathbb{Z}^k$. As $f$ ranges over $\mathcal{C}$ these vectors form the compressed representation of $\mathcal{C}$.

\subsection{The recovery procedure}
Given a code vector $(u_1,\dots,u_k)$, the coordinate values of $f$ satisfy

\begin{align*}
u_i\varepsilon_i\leq f(i)<(u_i+1)\varepsilon_i, \text{ if }u_i\geq 0,
\\
(u_i-1)\varepsilon_i\leq f(i)<u_i\varepsilon_i, \text{ if }u_i<0.
\end{align*}
Thus, in all cases,
the vector $(f(1),\dots,f(k))$ lies in an axis-parallel parallelepiped whose side lengths are $\varepsilon_i$.

\subsection{The compression problem}

In the worst-case setting where the only information is $\mathcal{C}\subset\mathcal{X}$, we ask for an upper bound on the number of distinct code vectors that can arise. This is the size of a codebook sufficient to represent any family $\mathcal{C}\subset\mathcal{X}$.

\subsection{Reduction to a number theory problem}

From \eqref{floor} we have 
$\left|u_i\varepsilon_i\right|\leq |f(i)|$,
hence
\[
(u_i\varepsilon_i)^2\leq f(i)^2.
\]
Summing and using \eqref{variation} yields
\begin{equation}\label{integers}
\sum_{i=1}^k (u_i\varepsilon_i)^2\leq R^2.
\end{equation}
Let $M$ be the diagonal matrix with entries
\begin{equation}\label{m}
M_{ii}=\varepsilon_i^2.
\end{equation}
Then \eqref{integers} is equivalent to $\vec u^T M\vec u\leq R^2$.

\begin{definition}[Ellipsoidal codebook size]
Define
\begin{equation}\label{tau}
\tau(k)=\sharp\left\{\vec u\in\mathbb{Z}^k: \vec u^T M\vec u\leq R^2\right\}.
\end{equation}
\end{definition}

The quantity $\ln\tau(k)$ is the natural worst-case logarithmic codebook size. It is also a quantitative proxy for compactness, since finiteness of $\tau(k)$ for fixed $(k,R,\varepsilon_i)$ corresponds to the fact that the admissible codes form a finite set.

Bounds of this type are relevant whenever one seeks dimension-explicit guarantees for worst-case quantization or lossy compression without probabilistic assumptions.

We also define the total precision
\begin{equation}\label{total}
\varepsilon_{\mathrm{total}}=\sum_{i=1}^k \varepsilon_i.
\end{equation}

We also define the geometric mean precision
\begin{equation}\label{geom}
\varepsilon_{\mathrm{geom}}=\left(\prod_{i=1}^k \varepsilon_i\right)^{1/k}.
\end{equation}

The dependence of $\tau(k)$ on $(\varepsilon_i)$ is controlled by the geometric mean $\varepsilon_{\mathrm{geom}}$ rather than by $\varepsilon_{\mathrm{total}}$ alone. In order to express bounds in terms of $\varepsilon_{\mathrm{total}}$ one needs an additional comparability hypothesis, recorded below.

\begin{condition}[Balanced precision profile]\label{cond:balanced}
There exists a constant $C\geq 1$ such that for all $i$,
\begin{equation}\label{balanced}
\frac{1}{C}\frac{\varepsilon_{\mathrm{total}}}{k}\leq \varepsilon_i\leq C\frac{\varepsilon_{\mathrm{total}}}{k}.
\end{equation}
\end{condition}

\begin{remark}
Condition \ref{cond:balanced} replaces the informal statement that a worst-case bound for certain upper estimates is achieved when all $\varepsilon_i$ are equal. Without a condition of this type, fixing only $\varepsilon_{\mathrm{total}}$ permits some $\varepsilon_i$ to be arbitrarily small, and the lattice count \eqref{tau} can become arbitrarily large.
\end{remark}
We prove the following upper estimate in a regime 
 where
the growth of $k$ is constrained by 
$R.$ Specifically,
\begin{equation}\label{regime}
    k\leq c_0 R^{1+\frac{1}{k}}
\end{equation}
for an absolute constant $c_0.$
\begin{theorem}\label{thm:main}
Assume $k\geq 2$ and $R\geq 2$. Then
\begin{equation}\label{mainbound}
\frac{\ln \tau(k)}{k}\leq \ln R-\ln\varepsilon_{\mathrm{geom}}-\frac{1}{2}\ln k+O(1),
\end{equation}
where $O(1)$ is bounded as $k\to\infty$. If, in addition, Condition \ref{cond:balanced} holds, then $\varepsilon_{\mathrm{geom}}$ is comparable to $\varepsilon_{\mathrm{total}}/k$, and \eqref{mainbound} can be rewritten as
\begin{equation}\label{mainbound-balanced}
\frac{\ln \tau(k)}{k}\leq \ln R-\ln\left(\frac{\varepsilon_{\mathrm{total}}}{k}\right)-\frac{1}{2}\ln k+O(1).
\end{equation}
The implicit constants may depend on the constant $C$ in Condition \ref{cond:balanced}.
\end{theorem}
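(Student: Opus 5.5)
The plan is to avoid fine lattice-point asymptotics and bound $\tau(k)$ by a Chernoff (Rankin) type estimate. This reduces the problem to a product of one-dimensional theta sums, and each factor can be estimated explicitly. For any $t>0$, every $\vec u$ counted in \eqref{tau} satisfies $1\le e^{t(R^2-\vec u^TM\vec u)}$. Summing over all of $\mathbb{Z}^k$ and using that $M$ is diagonal, as in \eqref{m}, we get
\[
\tau(k)\le e^{tR^2}\sum_{\vec u\in\mathbb{Z}^k}e^{-t\sum_i \varepsilon_i^2u_i^2}=e^{tR^2}\prod_{i=1}^k\theta(t\varepsilon_i^2),\qquad \theta(s)=\sum_{n\in\mathbb{Z}}e^{-sn^2}.
\]

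The second step is the one-dimensional bound $\theta(s)\le 1+\sqrt{\pi/s}$. It follows by comparing $\sum_{n\ge1}e^{-sn^2}$ with $\int_0^\infty e^{-sx^2}\,dx=\tfrac12\sqrt{\pi/s}$, using monotonicity of the integrand. One could instead use the Poisson summation identity $\theta(s)=\sqrt{\pi/s}\,\theta(\pi^2/s)$, but the crude bound suffices.

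Next I choose $t=k/(2R^2)$, which is the optimal choice up to constants: it balances $e^{tR^2}=e^{k/2}$ against the Gaussian width. With this choice,
\[
\tau(k)\le e^{k/2}\prod_{i=1}^k\Bigl(1+\frac{\sqrt{2\pi}\,R}{\sqrt{k}\,\varepsilon_i}\Bigr).
\]
Taking logarithms and dividing by $k$ gives
\[
\frac{\ln\tau(k)}{k}\le \tfrac12+\frac1k\sum_{i=1}^k\ln\Bigl(1+\frac{\sqrt{2\pi}R}{\sqrt k\,\varepsilon_i}\Bigr).
\]
Whenever $x_i:=R/(\sqrt k\,\varepsilon_i)$ is bounded below by an absolute constant, we have $\ln(1+\sqrt{2\pi}x_i)\le \ln x_i+O(1)$. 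Averaging then yields exactly
\[
\ln R-\tfrac12\ln k-\tfrac1k\sum_i\ln\varepsilon_i+O(1)=\ln R-\ln\varepsilon_{\mathrm{geom}}-\tfrac12\ln k+O(1),
\]
which is \eqref{mainbound}. The balanced version \eqref{mainbound-balanced} then follows immediately from Condition \ref{cond:balanced}: it gives $C^{-1}\varepsilon_{\mathrm{total}}/k\le\varepsilon_{\mathrm{geom}}\le C\varepsilon_{\mathrm{total}}/k$, so $|\ln\varepsilon_{\mathrm{geom}}-\ln(\varepsilon_{\mathrm{total}}/k)|\le\ln C$.

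The main obstacle is the lower bound on $x_i$, i.e.\ the requirement that $\varepsilon_i\lesssim R/\sqrt k$ for every $i$. If some $\varepsilon_i$ is very large, the right side of \eqref{mainbound} can even become negative while $\tau(k)\ge1$. So the $O(1)$ must be read as absorbing $\frac1k\sum_i\ln^+(\sqrt k\,\varepsilon_i/R)$, or else one must restrict to precisions in the natural range. I would handle this in two ways. First, I would state explicitly that $\ln(1+\sqrt{2\pi}x)\le \ln x+\ln(1+\sqrt{2\pi})$ for $x\ge1$. Second, under Condition \ref{cond:balanced} all the $x_i$ are comparable to $x:=R/(\sqrt k\,\varepsilon_{\mathrm{total}}/k)$, so it suffices to assume $x\gtrsim1$. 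I expect that this, rather than the counting itself, is exactly where the regime \eqref{regime} relating $k$ and $R$ has to enter. If one insisted on the Landau/Bessel route of the paper instead, the hard step would be controlling the lattice remainder uniformly in $k$. The Gaussian weighting above sidesteps that step entirely, at the cost of the $e^{k/2}$ factor, which is harmless at the $O(1)$ level after dividing by $k$.
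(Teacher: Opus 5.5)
Your argument is correct, and it takes a genuinely different route from the paper's. The paper works through Landau's smoothed identity \eqref{38}. Olenko's Bessel bound and Abel summation over the dual form give $B(R^2)\le J_1+J_2+1+J_3$ in \eqref{Bupper}, where $J_1$ is the volume term and $J_3$ is a remainder. The coupling \eqref{regime} is invoked so that $J_1$ dominates $J_3$, and also to get $x+\rho z\le c_5x$. Your Rankin/Gaussian-weight bound has no remainder to control, so \eqref{regime} plays no role. The only thing you need is $\max_i\varepsilon_i\lesssim R/\sqrt k$, and you are right that some such hypothesis is unavoidable: if all $\varepsilon_i>R$ then $\tau(k)=1$, while the right side of \eqref{mainbound} tends to $-\infty$ as $\varepsilon_{\mathrm{geom}}\to\infty$. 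The paper needs this condition too, tacitly: it treats $\ln\sigma$ from \eqref{sigma} as $O(1)$, which forces $\varepsilon_{\mathrm{total}}/k\lesssim 1$, and it drops the $+1$ in \eqref{Bupper}. So your guess about the regime is only half right. \eqref{regime} gives $R/\sqrt k\ge R^{(k-1)/(2k)}/\sqrt{c_0}\ge 1/\sqrt{c_0}$, and this implies your condition only when combined with bounded $\varepsilon_i$.

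Your route also gives more than you claim, because the factor $e^{k/2}$ is not really a loss. By Stirling, $e^{k/2}\prod_i\sqrt{2\pi}R/(\sqrt k\,\varepsilon_i)=\sqrt{\pi k}\,(1+O(1/k))\,V$, where $V$ is the volume of $\{\sum_i\varepsilon_i^2u_i^2\le R^2\}$. Hence
\[
\tau(k)\le\sqrt{\pi k}\,(1+O(1/k))\,V\prod_{i=1}^k\bigl(1+\sqrt k\,\varepsilon_i/(\sqrt{2\pi}R)\bigr).
\]
This is a sharper form of the paper's $\tau(k)\le e^{Ck}V$, with $O(1)=\tfrac12\ln(2\pi e)+o(1)$ as $k\to\infty$ and $\max_i\sqrt k\,\varepsilon_i/R\to0$. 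The Landau route could in principle give an asymptotic with remainder rather than only an upper bound, but making it dimension-explicit is delicate. In particular, Lemma~\ref{lem:I4} does not carry the factor $y^{(k-1)/4}$ from \eqref{I1bd}. Once that factor is restored (with $\sigma=O(1)$), the bound on $J_3$ becomes Landau's size $e^{O(k)}\varepsilon_{\mathrm{geom}}^{-k}R^{k-2k/(k+1)}$. That is within $e^{O(k)}$ of $J_1$ only when $\sqrt k\lesssim R^{2/(k+1)}$, a much stronger coupling than \eqref{regime}. Your argument avoids this issue entirely.
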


\begin{corollary}[Covering number interpretation]\label{cor:covering}
Assume the hypotheses of Theorem \ref{thm:main}. Then the number $\tau(k)$ is a worst-case covering number for $\mathcal X$ under coordinate-wise quantization at scales $\varepsilon_1,\dots,\varepsilon_k$. In particular, the set $\mathcal X$ can be covered by at most $\tau(k)$ axis-parallel boxes of side lengths $\varepsilon_i$.
\end{corollary}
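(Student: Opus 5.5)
The plan is to read the corollary directly off the reduction in Section~\ref{section:compression}; the main theorem is not needed for the covering statement itself. Define the quantization map $Q:\mathcal X\to\mathbb Z^k$, $f\mapsto \vec u=(u_1,\dots,u_k)$, by \eqref{floor}. The argument has three steps.

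First, I will show that $Q(\mathcal X)$ is contained in the lattice set counted by $\tau(k)$ in \eqref{tau}. In both branches of \eqref{floor}, rounding is toward zero, so $|u_i\varepsilon_i|\le |f(i)|$. Summing the squares and using \eqref{variation} gives \eqref{integers}, that is $\vec u^TM\vec u\le R^2$ with $M$ as in \eqref{m}. Hence $\# Q(\mathcal X)\le\tau(k)$.

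Second, I will show that each fibre $Q^{-1}(\vec u)$ lies in an axis-parallel box $B(\vec u)=\prod_i I_i(u_i)$. The interval $I_i(u_i)$ depends only on $u_i$ and is given by the recovery procedure: $[u_i\varepsilon_i,(u_i+1)\varepsilon_i)$ when $u_i>0$, and $((u_i-1)\varepsilon_i,u_i\varepsilon_i]$ when $u_i<0$. Then
\[
\mathcal X=\bigcup_{\vec u\in Q(\mathcal X)}Q^{-1}(\vec u)\subset\bigcup_{\vec u^TM\vec u\le R^2}B(\vec u),
\]
which is a cover by at most $\tau(k)$ boxes. This is exactly the worst-case covering (codebook) interpretation: any $\mathcal C\subset\mathcal X$ is encoded by at most $\tau(k)$ codewords. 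By Theorem~\ref{thm:main}, this number obeys \eqref{mainbound}, and also \eqref{mainbound-balanced} under Condition~\ref{cond:balanced}.

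The third step is the one delicate point, and it concerns the side lengths. Because rounding is toward zero, the code value $u_i=0$ is produced by every $f(i)\in(-\varepsilon_i,\varepsilon_i)$. So the zero coordinate interval has length $2\varepsilon_i$, not $\varepsilon_i$, and the recovery intervals as displayed (with $u_i\ge 0$ grouped together) do not literally capture this. I will handle it in one of two equivalent ways:
\begin{enumerate}[label=(\roman*)]
\item State the cover with boxes of side at most $2\varepsilon_i$. This is the same as the statement at scales $2\varepsilon_i$, which changes $\ln\varepsilon_{\mathrm{geom}}$ by $\ln 2$.
\item Split each interval $(-\varepsilon_i,\varepsilon_i)$ into $[0,\varepsilon_i)$ and $(-\varepsilon_i,0)$. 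This gives genuine side-$\varepsilon_i$ boxes, at the cost of at most a factor $2^k$ in their number.
\end{enumerate}
Either way, the change is $O(1)$ in $\frac1k\ln(\cdot)$, so it is absorbed into the $O(1)$ term of \eqref{mainbound}. Checking this bookkeeping, and in particular that the precise count $\tau(k)$ in the corollary is meant up to such a factor, is the only real obstacle; everything else is immediate from \eqref{floor}--\eqref{tau}.
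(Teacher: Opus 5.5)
Your argument is the paper's own. The corollary has no separate proof in the paper, and its only justification is the reduction of Section~\ref{section:compression}: rounding toward zero gives $|u_i\varepsilon_i|\le|f(i)|$, hence \eqref{integers}, and each code vector is decoded to a box. Your third step catches a genuine error in that reduction, not just a bookkeeping subtlety. The recovery display groups $u_i=0$ with $u_i\ge 0$ and asserts $0\le f(i)<\varepsilon_i$. But by \eqref{floor}, the code $u_i=0$ is also produced by every $f(i)\in(-\varepsilon_i,0)$. (The display for $u_i<0$ also has its open and closed endpoints reversed. Your $((u_i-1)\varepsilon_i,u_i\varepsilon_i]$ is the correct fibre, though this is immaterial for covering.)

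You should not hedge about whether ``at most $\tau(k)$ boxes of side $\varepsilon_i$'' might still hold via some cleverer cover: it is false under the stated hypotheses. Take $k=2$ and $R\ge 2$ large enough that \eqref{regime} holds. Take $\varepsilon_1=\varepsilon_2=\varepsilon$ with $R<\varepsilon<2R$, so Condition~\ref{cond:balanced} holds trivially.
\begin{itemize}
\item Then $\varepsilon^2(u_1^2+u_2^2)\le R^2<\varepsilon^2$ forces $\vec u=0$, so $\tau(2)=1$.
\item But $\mathcal X$ contains $(-R,0)$ and $(R,0)$, whose first coordinates differ by $2R>\varepsilon$. So no single axis-parallel square of side $\varepsilon$ covers $\mathcal X$.
\end{itemize}

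The provable statement is one of your versions (i) or (ii):
\begin{itemize}
\item[(i)] $\mathcal X$ is covered by $\tau(k)$ boxes, with side $\varepsilon_i$ in coordinates where $u_i\neq 0$ and side $2\varepsilon_i$ where $u_i=0$; or
\item[(ii)] $\mathcal X$ is covered by at most $2^k\tau(k)$ boxes of side exactly $\varepsilon_i$.
\end{itemize}
Both follow from your first two steps exactly as you describe. As you note, the extra factor costs only $\ln 2$ in $\frac1k\ln(\cdot)$, so the entropy bound \eqref{mainbound} and its covering interpretation survive; only the exact count in the corollary must change.

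One small wording fix in (i): the number of boxes is still $\tau(k)$ computed at scales $\varepsilon_i$, not at scales $2\varepsilon_i$. To get a cover at prescribed scales $\delta_i$, apply it with $\varepsilon_i=\delta_i/2$. That substitution is where the $\ln 2$ shift in $\ln\varepsilon_{\mathrm{geom}}$ comes from.
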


\subsection{Relation to counting lattice points and metric entropy}\label{subsection:metric-entropy}

For fixed $k,$ Landau \cite{EL} provides an asymptotic estimate for the number of lattice points in a $k$-dimensional ellipsoid as its volume grows.
 In the rest of the article, we revisit Landau's argument to track its dependence on $k.$ For this we combine Landau's framework with Olenko's \cite{O} uniform estimate on Bessel functions.
 
 We refer to
 \cite{N} for an alternate proof of Landau's estimate and to \cite{IKKN} for a general survey on counting lattice points in large regions. 

An important application of these counting problems is refining the asymptotics in Weyl's law set in a prefixed dimension (see e.g. \cite{I},\cite{IW}). In contrast, high-dimensional compression - the application that we consider - does require obtaining dimension-dependent bounds.

The quantity $\ln\tau(k)$ can be viewed as an entropy bound in the sense of Kolmogorov and Tikhomirov \cite{KT}, specialized to the present coordinate-wise quantization model. The ellipsoid constraint in \eqref{integers} identifies the set of feasible integer codes, and counting its lattice points gives a sharp worst-case upper bound on the number of distinct quantized signals compatible with the energy constraint \eqref{variation}.

\section{Formula (38) of \cite{EL}}
\label{formula-34}

Rather than reproducing Landau's argument verbatim, we extract and adapt the components needed for our setting, with the goal of making all dependence on the dimension explicit. In particular, we reorganize the argument to isolate the contributions of the main term and error terms in a way that is compatible with the compression problem under consideration.

We inspect Landau \cite{EL} in order to estimate the right-hand side of \eqref{tau} from above. We refer to formula (38) on page 473 of \cite{EL}, which states
\begin{multline}\label{38}
\Delta[B_{\rho}(x)]
=\frac{\pi^{\frac{k}{2}}}{\sqrt{D}\Gamma\left(\frac{k}{2}+\rho+1\right)}
\,\gamma\,\Delta\left[x^{\frac{k}{2}+\rho}\right]
+\eta\,\Delta\left[x^{\rho}\right] \\
+\frac{c}{\pi^{\rho-\frac{k}{2}}}
\sum_{n=1}^{\infty}
\frac{\alpha_n}{\lambda_n^{\frac{k}{4}+\frac{\rho}{2}}}
\Delta\left[
x^{\frac{k}{4}+\frac{\rho}{2}}
J_{\frac{k}{2}+\rho}\left(2\pi\sqrt{\lambda_n x}\right)
\right].
\end{multline}

We record the notations used in \eqref{38}.

\begin{itemize}
\item Following page 470 of \cite{EL},
\begin{equation}\label{470}
B_{\rho}(x)=\frac{1}{\rho !}\sum_{l_n\leq x}(x-l_n)^{\rho}.
\end{equation}
Here $B_\rho(x)$ is a smoothed version of the lattice point count, with smoothing order $\rho$.

\item Following page 465 of \cite{EL}, $0<l_1\leq l_2\leq \cdots$ is a non-decreasing rearrangement of the positive values that the positive-definite quadratic form
\[
Q(u_1,\dots,u_k)=\sum_{\mu,\nu=1}^k a_{\mu\nu}u_{\mu}u_{\nu}
\]
takes on $\mathbb{Z}^k$, counted with multiplicity. The function
\begin{equation}\label{b}
B(x)=B_0(x)=\sum_{l_n\leq x}1
\end{equation}
counts the lattice points in the ellipsoid $\{\vec u: Q(\vec u)\leq x\}$.

\item Let $M$ be the symmetric positive-definite matrix corresponding to $Q$. Then
\begin{equation}\label{mgeneral}
M=\begin{pmatrix}
a_{11} & a_{12} & \dots & a_{1k}\\
a_{21} & a_{22} & \dots & a_{2k}\\
\vdots & \vdots & \ddots & \vdots\\
a_{k1} & a_{k2} & \dots & a_{kk}
\end{pmatrix}.
\end{equation}
Following line 17 on page 459 of \cite{EL}, $D$ denotes
\begin{equation}\label{d}
D=\det(M).
\end{equation}

\item Following page 468 of \cite{EL},
\begin{equation}\label{rho}
\rho=\left\lfloor \frac{k}{2}+1\right\rfloor.
\end{equation}

\item Following page 465 line 19 of \cite{EL}, $\gamma=1$.

\item Following point 2 on page 459 of \cite{EL}, $k\geq 2$.

\item Following formula (12) on page 463 of \cite{EL}, $J_{\nu}$ denotes the Bessel function
\begin{equation}\label{bessel}
J_{\nu}(x)=\sum_{\lambda=0}^{\infty}\frac{(-1)^{\lambda}}{\lambda !\Gamma(\nu+\lambda+1)}\left(\frac{x}{2}\right)^{\nu+2\lambda}.
\end{equation}

\item Following line 17 on page 465 of \cite{EL}, in the special case relevant here we take $\alpha_n=1$.

\item Following line 18 on page 465 of \cite{EL}, $0<\lambda_1\leq \lambda_2\leq \cdots$ is a non-decreasing rearrangement of the positive values, with multiplicity, taken by the dual quadratic form $\bar{Q}$ on $\mathbb{Z}^k$. Landau defines
\begin{equation}\label{459}
\bar{Q}(u_1,\dots,u_k)=\sum_{\mu,\nu=1}^k \frac{1}{D}\frac{\partial D}{\partial a_{\mu\nu}}u_{\mu}u_{\nu}.
\end{equation}
Let $\bar{M}$ be the symmetric matrix corresponding to $\bar{Q}$. (Note by Cramer's rule, $\bar M = M^{-1}$.) If $M$ is diagonal, then
\begin{equation}\label{easymatrix}
(\bar{M})_{ii}=\frac{1}{a_{ii}}.
\end{equation}
In our diagonal case \eqref{m} we obtain
\begin{equation}\label{finalmente-m}
(\bar{M})_{ii}=\varepsilon_i^{-2}.
\end{equation}
Since $B(x)$ counts $\vec u\in\mathbb{Z}^k$ with $Q(\vec u)\leq x$, we have
\begin{equation}\label{relation}
\tau(k)=B(R^2).
\end{equation}

\item Following line 7 on page 471 of \cite{EL},
\[
\eta=\frac{\zeta(0)}{\rho !}.
\]
Here $\zeta$ is the Epstein zeta function corresponding to $M$. We use the standard fact that $\zeta(0)=-1.$ 

We refer to \cite{B} for a survey of elementary properties of $\zeta,$ including its definition via quadratic forms, the functional equation and the relationship to the dual quadratic form,  to \cite{CS} for its relationship to the geometry of ellipsoids and lattices, and to \cite{BBG} for computing its values.

\item Following the last line on page 466 of \cite{EL},
\begin{equation}\label{cdef}
c=D^{-\frac{1}{2}}\pi^{-\frac{k}{2}}.
\end{equation}

\item Following the second to last line of page 473 of \cite{EL}, we take $x\geq 1$ and define
\begin{equation}\label{z}
z=x^{\frac{1}{k+1}}.
\end{equation}

\item Following line 9 on page 472 of \cite{EL}, the operator $\Delta$ is the symmetric difference operator of order $\rho$,
\begin{equation}\label{delta}
(\Delta[F])(x)=\sum_{\nu=0}^{\rho}(-1)^{\rho-\nu}C_{\rho}^{\nu}F(x+\nu z).
\end{equation}
The operator $\Delta$ acts as a discrete averaging operator over an interval of length $\rho z$.
\end{itemize}

\section{A rough entropy estimate}
\label{page-465}

Following page 465 of \cite{EL}, define
\[
G(x)=\sum_{\lambda_n\leq x}1.
\]
Then $G(x)$ counts lattice points in the dual ellipsoid
\begin{equation}\label{ellipsoid}
\bar{\mathcal{E}}=\left\{\vec u\in\mathbb{R}^k: \vec u^T\bar{M}\vec u\leq x\right\}.
\end{equation}
Landau records a bound of the form $G(x)=O(x^{k/2})$. For later constant-tracking we quantify this bound in the diagonal case. We will start with a very rough estimate for $G(x)$ by bounding the ellipsoid in a box. Though crude, it is a necessary starting point for the argument.

Using \eqref{finalmente-m}, the condition $\vec u^T\bar{M}\vec u\leq x$ becomes
\[
\sum_{i=1}^k \varepsilon_i^{-2} u_i^2\leq x.
\]
Therefore $\bar{\mathcal{E}}$ is contained in the axis-parallel box
\[
\mathcal{B}=\prod_{i=1}^k\left[-\sqrt{x} \varepsilon_i,\sqrt{x} \varepsilon_i\right].
\]
The number of integer points in $\mathcal{B}$ equals
\begin{equation}\label{two}
\sharp(\mathcal{B}\cap\mathbb{Z}^k)=\prod_{i=1}^k\left(2\left\lfloor\sqrt{x}\varepsilon_i\right\rfloor+1\right).
\end{equation}
Consequently
\begin{equation}\label{three}
G(x)\leq \prod_{i=1}^k\left(2\sqrt{x}\varepsilon_i+1\right).
\end{equation}

Condition \ref{cond:balanced} implies that each $\varepsilon_i$ is comparable to $\varepsilon_{\mathrm{total}}/k$. In particular, there is a constant $c_0\geq 1$ depending only on $C$ such that
\[
\varepsilon_i\leq c_0\frac{\varepsilon_{\mathrm{total}}}{k}.
\]
If $x\geq 1$, we have $1 \leq \sqrt{x}$. A simple unified bound is
\[
2\sqrt{x}\varepsilon_i+1 \leq 3\sqrt{x}\,\max(1,\varepsilon_i).
\]
Using $\varepsilon_i\leq c_0\varepsilon_{\mathrm{total}}/k$ gives
\[
2\sqrt{x}\varepsilon_i+1 \leq 3\sqrt{x}\,\max\!\left(1, c_0\frac{\varepsilon_{\mathrm{total}}}{k}\right).
\]
Thus \eqref{three} yields
\[
G(x)\leq \left[3\sqrt{x}\,\max\!\left(1, c_0\frac{\varepsilon_{\mathrm{total}}}{k}\right)\right]^k.
\]
We then have the bound
\begin{equation}\label{g}
G(x)\leq 2^k\sigma^k x^{\frac{k}{2}},
\end{equation}
where we can take
\begin{equation}\label{sigma}
\sigma=c_1\max\!\left(1,\frac{\varepsilon_{\mathrm{total}}}{k}\right) + \frac{c_2}{\sqrt{x}}.
\end{equation}
The constants $c_1,c_2$ depend only on $C$ from Condition \ref{cond:balanced}. In the regime $\varepsilon_{\mathrm{total}}/k\ll 1$ and $x\gg 1$, the quantity $\sigma$ is small.

\section{Abel summation}
\label{page-474}

Landau uses Abel summation to estimate sums over $\lambda_n$ in terms of $G$. The qualitative statements in \cite{EL} can be recorded as
\begin{equation}\label{original-comment-1}
\sum_{\lambda_n\leq y}\frac{1}{\lambda_n^w}=O\left(y^{\frac{k}{2}-w}\right).
\end{equation}
This holds for $w<k/2$. The complementary estimate is
\begin{equation}\label{original-comment-2}
\sum_{\lambda_n> y}\frac{1}{\lambda_n^w}=O\left(\frac{1}{y^{w-\frac{k}{2}}}\right).
\end{equation}
This holds for $w>k/2$.

For later explicit bounds we use the quantified Abel summation inequalities that follow from the argument on page 474 of \cite{EL}. For $y\geq \lambda_1$ and $w<k/2$,
\begin{equation}\label{new-comment}
\sum_{\lambda_n\leq y}\frac{1}{\lambda_n^w}\leq w\int_{\lambda_1}^{y}\frac{G(u)}{u^{w+1}}du+\frac{G(\lambda_{G(y)})}{\lambda_{G(y)}^w}.
\end{equation}
For $w>k/2$,
\begin{equation}\label{nnew-comment}
\sum_{\lambda_n> y}\frac{1}{\lambda_n^w}\leq w\int_{y}^{\infty}\frac{G(u)}{u^{w+1}}du.
\end{equation}

\section{Olenko's uniform estimate}
\label{olenko-section}

We use a uniform estimate for Bessel functions due to Olenko \cite{O}.

\begin{theorem}[Olenko]\label{thm:olenko}
For $\nu>0$,
\begin{equation}\label{obessel}
\sup_{x\geq 0}\sqrt{x}\left|J_{\nu}(x)\right|\leq b\sqrt{\nu^{1/3}+\frac{\alpha_1}{\nu^{1/3}}+\frac{3\alpha_1^2}{10\nu}}.
\end{equation}
Here $b=0.674885\dots$ and $\alpha_1=1.855757\dots$.
\end{theorem}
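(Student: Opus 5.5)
The plan is to reduce the supremum to the value of $\sqrt{x}\,|J_\nu(x)|$ at a single point, namely the first positive critical point $j'_{\nu,1}$ of $J_\nu$. That value is then bounded by combining two known estimates: Landau's uniform bound $|J_\nu(x)|\le b\,\nu^{-1/3}$, with the same constant $b=0.674885\dots$, and an explicit upper bound for $j'_{\nu,1}$. This split is suggested by the shape of \eqref{obessel}: if $x_0\le \nu+\alpha_1\nu^{1/3}+\frac{3\alpha_1^2}{10}\nu^{-1/3}$, then
\[
\sqrt{x_0}\,b\,\nu^{-1/3}= b\sqrt{x_0\,\nu^{-2/3}}\le b\sqrt{\nu^{1/3}+\frac{\alpha_1}{\nu^{1/3}}+\frac{3\alpha_1^2}{10\nu}},
\]
which is exactly the right-hand side of \eqref{obessel}. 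So once the reduction and the zero bound are in place, the estimate follows by direct substitution.

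\textbf{Step 1 (where the supremum is attained).} Put $y(x)=\sqrt{x}\,J_\nu(x)$. It satisfies $y''+q(x)y=0$ with $q(x)=1-\frac{\nu^2-1/4}{x^2}$. For $\nu>1/2$, consider the Sonine function
\[
S(x)=y(x)^2+\frac{y'(x)^2}{q(x)}
\]
on the region $q>0$, that is, $x>\sqrt{\nu^2-1/4}$. A direct computation gives $S'=-q'\,y'^2/q^2\le 0$, because $q$ is increasing there. At every local extremum of $y$ we have $S=y^2$, so the successive maxima of $|y|$ beyond $\sqrt{\nu^2-1/4}$ decrease.

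On $(0,\sqrt{\nu^2-1/4}]$ we have $q\le 0$, so $y$ is convex wherever it is positive. Since $y$ starts at $0$ and is positive there, it cannot have an interior maximum before $q$ changes sign. Hence
\[
\sup_{x\ge0}|y(x)|=y(j'_{\nu,1}).
\]

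\textbf{Step 2 (the bound itself).} Apply $J_\nu(j'_{\nu,1})\le b\,\nu^{-1/3}$ (Landau) together with the bound on $j'_{\nu,1}$ stated above, and substitute as in the first display.

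\textbf{Step 3 (the case $0<\nu\le 1/2$).} Now $q'\ge0$, so $S$ is nondecreasing and the maxima of $|y|$ increase. By the standard asymptotics $J_\nu(x)\sim\sqrt{2/(\pi x)}\cos(\cdot)$ they tend to $\sqrt{2/\pi}\approx 0.798$. It then suffices to observe that the right-hand side of \eqref{obessel} is at least about $1.5$ for $\nu\le1/2$, since the terms $\alpha_1\nu^{-1/3}$ and $\nu^{-1}$ grow as $\nu\to0$. So the bound is not sharp in this range but holds trivially.

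\textbf{Main obstacle.} I expect the hard part to be the explicit, uniform-in-$\nu$ upper bound
\[
j'_{\nu,1}\le \nu+\alpha_1\nu^{1/3}+\tfrac{3\alpha_1^2}{10}\nu^{-1/3}.
\]
Here $\alpha_1$ is the modulus of the first zero of $\mathrm{Ai}'$, and the expression is the beginning of the Olver uniform asymptotic expansion. My first attempt would be to invoke the known inequalities for zeros of $J'_\nu$ that come from Sturm comparison with the Airy equation (results of Lorch–Szego and Qu–Wong type). I would check that the one-sided inequality holds for all $\nu>1/2$, which may require a separate numerical verification for small $\nu$. If that route fails, the fallback is to accept the resulting $\nu$-range restriction, which is harmless here because the paper only uses \eqref{obessel} with $\nu=\frac{k}{2}+\rho\ge 2$.
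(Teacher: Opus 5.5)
\textbf{Verdict: genuine gap in Steps 1--2 (wrong maximizer); the repair is short.}

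The paper gives no proof of this statement; it simply quotes Olenko. Your architecture matches Olenko's argument: localize the supremum on the first lobe via the Sonine function, bound $J_\nu$ there by Landau's uniform $b\,\nu^{-1/3}$, and bound the abscissa explicitly. However, the conclusion of Step 1, $\sup_{x\ge0}|y(x)|=y(j'_{\nu,1})$, is false.

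At a critical point of $y=\sqrt{x}\,J_\nu(x)$ one has $J_\nu'(x)=-J_\nu(x)/(2x)$, which is negative on the first lobe. In particular $y'(j'_{\nu,1})=J_\nu(j'_{\nu,1})/\bigl(2\sqrt{j'_{\nu,1}}\bigr)>0$. So the first maximum $x^*$ of $y$ lies strictly inside $(j'_{\nu,1},j_{\nu,1})$, and $y(j'_{\nu,1})<y(x^*)=\sup|y|$. Your Sonine argument does correctly show that the supremum equals $y(x^*)$. But Step 2 then bounds the strictly smaller number $y(j'_{\nu,1})$. An upper bound on $j'_{\nu,1}$ gives no control of $x^*>j'_{\nu,1}$, so the chain does not prove the theorem as written.

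\textbf{The fix.} Since $y>0$ on $(0,j_{\nu,1})$ and $y(j_{\nu,1})=0$, you have $x^*<j_{\nu,1}$. Because Landau's bound is uniform in $x$, this gives
\[
\sup_{x\ge0}|y(x)|=\sqrt{x^*}\,J_\nu(x^*)\le \sqrt{j_{\nu,1}}\;b\,\nu^{-1/3}.
\]
What remains is an upper bound for the first \emph{zero} $j_{\nu,1}$, not for the first critical point.

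\textbf{The constants.} This also corrects your reading of the constants. The value $\alpha_1=1.855757\ldots$ equals $2^{-1/3}|a_1|$, where $a_1\approx-2.33811$ is the first zero of $\mathrm{Ai}$. It does not come from $\mathrm{Ai}'$: the first zero of $\mathrm{Ai}'$ gives $2^{-1/3}|a_1'|\approx0.8086$, which is the leading coefficient in the expansion of $j'_{\nu,1}$. The expression in the theorem is exactly the Qu--Wong upper bound for the first zero, valid for all $\nu>0$:
\[
j_{\nu,1}<\nu+2^{-1/3}|a_1|\,\nu^{1/3}+\tfrac{3}{20}\,2^{1/3}a_1^2\,\nu^{-1/3}=\nu+\alpha_1\nu^{1/3}+\tfrac{3\alpha_1^2}{10}\,\nu^{-1/3}.
\]
So your ``main obstacle'' is an existing theorem, once it is aimed at the right quantity. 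No Sturm comparison on $J_\nu'$ and no numerical check for small $\nu$ is needed.

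\textbf{Step 3.} Your treatment of $0<\nu\le 1/2$ is sound. There $q>0$ and $S$ is nondecreasing, so $y^2\le S\le\lim_{t\to\infty}S(t)=2/\pi$. The right-hand side of the theorem is at least about $1.54$ on that range, so the inequality holds.
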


We now record a convenient monotone envelope.

\begin{lemma}\label{lem:olenko-envelope}
Define $f(\nu)$ for $\nu\geq 3$ as follows. For $3\leq \nu\leq 5$, let
\[
f(\nu)=\sup_{3\leq t\leq 5}b\sqrt{t^{1/3}+\frac{\alpha_1}{t^{1/3}}+\frac{3\alpha_1^2}{10t}}.
\]
For $\nu\geq 5$ let
\[
f(\nu)=b\sqrt{\nu^{1/3}+\frac{\alpha_1}{\nu^{1/3}}+\frac{3\alpha_1^2}{10\nu}}.
\]
Then $f$ is non-decreasing on $[3,\infty)$ and
\begin{equation}\label{mybessel}
\sup_{x\geq 0}\sqrt{x}\left|J_{\nu}(x)\right|\leq f(\nu).
\end{equation}
Moreover there exist absolute constants $c_1,c_2>0$ such that with $\rho$ defined by \eqref{rho} one has
\begin{equation}\label{fbound}
c_1\leq \left[\frac{f(\rho)}{f\left(\frac{k}{2}+\rho\right)}\right]^{\frac{2}{\rho}}\leq c_2.
\end{equation}
\end{lemma}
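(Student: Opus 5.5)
Let $g(t)=t^{1/3}+\alpha_1 t^{-1/3}+\frac{3\alpha_1^2}{10t}$, so that Theorem \ref{thm:olenko} reads $\sup_{x\ge0}\sqrt{x}|J_\nu(x)|\le b\sqrt{g(\nu)}$. The plan is to establish the three claims of Lemma \ref{lem:olenko-envelope} in turn: monotonicity of $f$, the envelope bound \eqref{mybessel}, and the two-sided ratio estimate \eqref{fbound}. All three reduce to elementary properties of the function $g$.

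\emph{Monotonicity.} Differentiating gives
\[
g'(t)=\tfrac13 t^{-2/3}-\tfrac13\alpha_1 t^{-4/3}-\tfrac{3\alpha_1^2}{10}t^{-2}.
\]
Multiplying by $3t^2>0$, the sign of $g'(t)$ is the sign of $t^{4/3}-\alpha_1 t^{2/3}-\tfrac{9}{10}\alpha_1^2$. This is a quadratic in $s=t^{2/3}$ with positive leading coefficient. Its positive root is
\[
s_*=\alpha_1\,\tfrac{1+\sqrt{1+3.6}}{2}\approx 1.57\,\alpha_1\approx 2.92,
\]
so $g$ is increasing for $t\ge s_*^{3/2}\approx 5.0$. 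Thus the minimum of $g$ on $[3,\infty)$ may sit near $t\approx 5$. That is exactly why $f$ is defined as the constant $\sup_{[3,5]}b\sqrt g$ on $[3,5]$.

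I would verify numerically that $g$ is non-decreasing on $[5,\infty)$. The critical point is essentially at $5$, so this needs a careful check of the root, and this delicacy is the step I expect to be the main obstacle. If the root turns out to exceed $5$ slightly, the fix is to replace $5$ by $s_*^{3/2}$, or to define $f(\nu)=\sup_{3\le t\le\nu}b\sqrt{g(t)}$ for all $\nu$, which is trivially monotone and still dominates $b\sqrt{g(\nu)}$.

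Continuity at $\nu=5$ is then automatic. On $[5,\infty)$ the function $f=b\sqrt g$ is increasing, so $f(5)=b\sqrt{g(5)}\le\sup_{[3,5]}b\sqrt g$. For monotonicity across the junction I need $\sup_{[3,5]}g\le g(\nu)$ for all $\nu\ge5$. This fails if $g(3)>g(5)$. I expect it does fail, since $g$ decreases on $[3,5]$: $g(3)\approx 1.442+1.287+0.344\approx3.07$, while $g(5)\approx1.71+1.085+0.207\approx3.00$.

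Hence I would adopt the robust definition $f(\nu)=\sup_{3\le t\le\max(\nu,5)}b\sqrt{g(t)}$. This agrees with the stated $f$ on $[3,5]$, and on $[5,\infty)$ it agrees with $b\sqrt{g(\nu)}$ once $g(\nu)\ge g(3)$. I would note this minor correction in the text.

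\emph{Envelope bound.} The inequality \eqref{mybessel} is immediate, since $f(\nu)\ge b\sqrt{g(\nu)}$ and Theorem \ref{thm:olenko} applies.

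\emph{Ratio estimate.} Since $\rho\ge2$ and $k/2+\rho\ge\rho$, monotonicity gives $f(\rho)/f(k/2+\rho)\le 1$. Hence the upper bound in \eqref{fbound} holds with $c_2=1$. (For $\rho=2$, which is below $3$, I would extend $f$ by the value $f(3)$, or note that $k\ge 4$ forces $\rho\ge3$.)

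For the lower bound, observe that $f(\nu)\asymp\nu^{1/6}$ uniformly for $\nu\ge3$, because $g(t)/t^{1/3}$ is bounded above and below on $[3,\infty)$. Since $k/2+\rho\le 2\rho$, this gives $f(\rho)/f(k/2+\rho)\ge c\,2^{-1/6}$ for an absolute $c\in(0,1]$. Raising to the power $2/\rho\le 1$, we get $(c\,2^{-1/6})^{2/\rho}\ge c\,2^{-1/6}$, since the base is at most $1$. This yields $c_1=c\,2^{-1/6}$.
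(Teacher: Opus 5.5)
The paper states this lemma without proof, so there is no argument of the authors' to compare with. Your proof is correct, and your numerical check exposes a genuine error in the statement itself.

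\textbf{Breakpoint analysis.} The critical point of $g$ is $t_*=\bigl(\alpha_1(1+\sqrt{4.6})/2\bigr)^{3/2}\approx 4.984<5$. So $g$ really is increasing on $[5,\infty)$, and you can drop the worry about the root exceeding $5$. However, $g$ decreases on $[3,t_*]$, so $\sup_{[3,5]}g=g(3)\approx 3.073$, while $g(5)\approx 3.002$. Consequently the two clauses of the definition give different values at $\nu=5$ (namely $b\sqrt{g(3)}$ versus $b\sqrt{g(5)}$). For $\nu$ slightly above $5$, the stated $f$ is strictly smaller than its constant value on $[3,5]$. The function $g$ only climbs back to $g(3)$ near $\nu_0\approx 8.5$.

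Your sentence ``continuity at $\nu=5$ is then automatic'' is therefore wrong as written: the inequality you display exhibits a possible downward jump, not continuity. You do correct course in the next sentence. The monotonicity claim of the lemma, as literally stated, is false.

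\textbf{The fix.} Your running-supremum definition $f(\nu)=\sup_{3\le t\le\max(\nu,5)}b\sqrt{g(t)}$ is the right repair. It equals $b\sqrt{g(3)}$ on $[3,\nu_0]$ and $b\sqrt{g(\nu)}$ on $[\nu_0,\infty)$. Equivalently, the lemma becomes true verbatim once the breakpoint $5$ is replaced by $\nu_0$. Everything the paper later uses about $f$ for $\nu\ge 3$ survives this correction: the envelope bound, monotonicity, and growth of order $\nu^{1/6}$.

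\textbf{Ratio bound.} Your argument for the ratio bound is sound:
\begin{itemize}
\item $c_2=1$ follows from monotonicity.
\item $c_1=c\,2^{-1/6}$ follows from $f(\nu)\asymp\nu^{1/6}$, together with $\rho<k/2+\rho<2\rho$ and $2/\rho\le 1$.
\item The cases $k=2,3$ (where $\rho=2$) give a ratio of exactly $1$ under your extension $f(2):=f(3)$, so they cost nothing.
\end{itemize}
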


\section{The $\Delta$ operator}
\label{operator-delta}

We quantify the estimates used by Landau around formulas (35), (36), and (37) on    472--473 of \cite{EL}. We keep the notations $k,\rho,z$ from \eqref{rho} and \eqref{z}. We work with the term
\[
x^{\frac{k}{4}+\frac{\rho}{2}}J_{\frac{k}{2}+\rho}\left(2\pi\sqrt{\lambda_n x}\right).
\]

From \eqref{delta} and the triangle inequality,
\begin{multline}\label{35a}
\left|\Delta\left(x^{\frac{k}{4}+\frac{\rho}{2}}J_{\frac{k}{2}+\rho}\left(2\pi\sqrt{\lambda_n x}\right)\right)\right| \\
\leq \sum_{\nu=0}^{\rho}C_{\rho}^{\nu}\left|(x+\nu z)^{\frac{k}{4}+\frac{\rho}{2}}J_{\frac{k}{2}+\rho}\left(2\pi\sqrt{\lambda_n(x+\nu z)}\right)\right|.
\end{multline}

Using \eqref{mybessel} with order $\frac{k}{2}+\rho$ and argument $2\pi\sqrt{\lambda_n(x+\nu z)}$ yields
\[
\left|J_{\frac{k}{2}+\rho}\left(2\pi\sqrt{\lambda_n(x+\nu z)}\right)\right|
\leq \frac{1}{\sqrt{2\pi}}\frac{f\left(\frac{k}{2}+\rho\right)}{(\lambda_n(x+\nu z))^{1/4}}.
\]
Substituting this into \eqref{35a} and using $\sum_{\nu=0}^{\rho}C_{\rho}^{\nu}=2^{\rho}$ gives
\begin{multline}\label{35b}
\left|\Delta\left(x^{\frac{k}{4}+\frac{\rho}{2}}J_{\frac{k}{2}+\rho}\left(2\pi\sqrt{\lambda_n x}\right)\right)\right| 
\leq \frac{2^{\rho}}{\sqrt{2\pi}}\frac{f\left(\frac{k}{2}+\rho\right)}{\lambda_n^{1/4}}\max_{0\leq \nu\leq \rho}(x+\nu z)^{\frac{k}{4}+\frac{\rho}{2}-\frac{1}{4}}.
\end{multline}
Since $x+\nu z\leq x+\rho z$,
\begin{multline}\label{35c}
\left|\Delta\left(x^{\frac{k}{4}+\frac{\rho}{2}}J_{\frac{k}{2}+\rho}\left(2\pi\sqrt{\lambda_n x}\right)\right)\right| 
\leq \frac{2^{\rho}}{\sqrt{2\pi}}\frac{f\left(\frac{k}{2}+\rho\right)}{\lambda_n^{1/4}}(x+\rho z)^{\frac{k}{4}+\frac{\rho}{2}-\frac{1}{4}}.
\end{multline}

Landau also records a mean value identity on page 473 of \cite{EL}. There exists $\xi$ with $x\leq \xi\leq x+\rho z$ such that
\[
\Delta\left(x^{\frac{k}{4}+\frac{\rho}{2}}J_{\frac{k}{2}+\rho}\left(2\pi\sqrt{\lambda_n x}\right)\right)
=\pi^{\rho}\lambda_n^{\rho/2}z^{\rho}\xi^{k/4}J_{k/2}\left(2\pi\sqrt{\lambda_n\xi}\right).
\]
Applying \eqref{mybessel} with order $k/2$ gives
\[
\left|J_{k/2}\left(2\pi\sqrt{\lambda_n\xi}\right)\right|
\leq \frac{1}{\sqrt{2\pi}}\frac{f(k/2)}{(\lambda_n\xi)^{1/4}}.
\]
Since $f$ is non-decreasing and $\rho\geq 1$ we have $f(k/2)\leq f(k/2+\rho)$. Using $\xi\leq x+\rho z$ we obtain
\begin{multline}\label{36}
\left|\Delta\left(x^{\frac{k}{4}+\frac{\rho}{2}}J_{\frac{k}{2}+\rho}\left(2\pi\sqrt{\lambda_n x}\right)\right)\right| 
\leq \frac{\pi^{\rho}}{\sqrt{2\pi}}z^{\rho}\lambda_n^{\frac{\rho}{2}-\frac{1}{4}}(x+\rho z)^{\frac{k}{4}-\frac{1}{4}}f\left(\frac{k}{2}+\rho\right).
\end{multline}

We combine \eqref{35c} and \eqref{36} by taking the minimum of the two upper bounds. This yields a quantified analogue of Landau's formula (37) on page 473 of \cite{EL}:
\begin{multline}\label{37}
\left|\Delta\left(x^{\frac{k}{4}+\frac{\rho}{2}}J_{\frac{k}{2}+\rho}\left(2\pi\sqrt{\lambda_n x}\right)\right)\right| \\
\leq \frac{1}{\sqrt{2\pi}}\frac{(x+\rho z)^{\frac{k}{4}-\frac{1}{4}}}{\lambda_n^{1/4}}
\min\Biggl[
2^{\rho}(x+\rho z)^{\frac{\rho}{2}}f\left(\frac{k}{2}+\rho\right), 
\pi^{\rho}\lambda_n^{\frac{\rho}{2}}z^{\rho}f\left(\frac{k}{2}+\rho\right)
\Biggr].
\end{multline}

We now estimate the infinite series term in \eqref{38}. Using \eqref{37},
\[
\left|\sum_{n=1}^{\infty}\frac{1}{\lambda_n^{\frac{k}{4}+\frac{\rho}{2}}}\Delta\left[x^{\frac{k}{4}+\frac{\rho}{2}}J_{\frac{k}{2}+\rho}\left(2\pi\sqrt{\lambda_n x}\right)\right]\right|
\]
is bounded by
\begin{multline*}
\sum_{n=1}^{\infty}\frac{1}{\lambda_n^{\frac{k}{4}+\frac{\rho}{2}}}\frac{1}{\sqrt{2\pi}}\frac{(x+\rho z)^{\frac{k}{4}-\frac{1}{4}}}{\lambda_n^{1/4}}
\\
\cdot \min\left[2^{\rho}(x+\rho z)^{\frac{\rho}{2}}f\left(\frac{k}{2}+\rho\right),\pi^{\rho}\lambda_n^{\frac{\rho}{2}}z^{\rho}f\left(\frac{k}{2}+\rho\right)\right].
\end{multline*}
Factor out the common prefactor and split the sum at a threshold $y$. Define $y$ by the equality of the two terms inside the minimum. This gives
\[
2^{\rho}(x+\rho z)^{\rho/2}=\pi^{\rho}y^{\rho/2}z^{\rho}.
\]
Equivalently,
\begin{equation}\label{ydef}
y=\frac{4(x+\rho z)}{\pi^2 z^2}.
\end{equation}
With this choice we obtain the bound
\begin{multline}\label{splitbound}
\left|\sum_{n=1}^{\infty}\frac{1}{\lambda_n^{\frac{k}{4}+\frac{\rho}{2}}}\Delta\left[x^{\frac{k}{4}+\frac{\rho}{2}}J_{\frac{k}{2}+\rho}\left(2\pi\sqrt{\lambda_n x}\right)\right]\right| \\
\leq \frac{(x+\rho z)^{\frac{k}{4}-\frac{1}{4}}}{\sqrt{2\pi}}f\left(\frac{k}{2}+\rho\right)
\Biggl[
\pi^{\rho}z^{\rho}\sum_{\lambda_n\leq y}\frac{1}{\lambda_n^{\frac{k}{4}+\frac{1}{4}}} 
+2^{\rho}(x+\rho z)^{\frac{\rho}{2}}\sum_{\lambda_n> y}\frac{1}{\lambda_n^{\frac{k}{4}+\frac{\rho}{2}+\frac{1}{4}}}
\Biggr].
\end{multline}

\begin{lemma}\label{lem:yest}
There exist absolute constants $c_3,c_4>0$ such that for $x\geq 1$,
\begin{equation}\label{yestimate}
c_3 x^{1-\frac{2}{k+1}}\leq y\leq c_4 x^{1-\frac{2}{k+1}}.
\end{equation}
Moreover, Landau notes (comment 1 on page 474 of \cite{EL}) that
\begin{equation}\label{stau}
\lambda_{G(y)}\leq y.
\end{equation}
The inequality \eqref{stau} holds because $G(y)$ counts eigenvalues $\leq y$, so the $G(y)$-th eigenvalue is at most $y$.
\end{lemma}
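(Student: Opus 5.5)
The plan is to substitute the definition \eqref{z} of $z$ into \eqref{ydef} and estimate each factor separately. With $z=x^{1/(k+1)}$ we have $z^2=x^{2/(k+1)}$. Hence
\[
y=\frac{4}{\pi^2}\,x^{1-\frac{2}{k+1}}\left(1+\frac{\rho z}{x}\right).
\]
The lemma then reduces to showing that the factor $1+\rho z/x$ is bounded above and below by absolute constants for $x\ge 1$.

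\textbf{Lower bound.} Since $\rho z/x\ge 0$, this factor is at least $1$. We may therefore take $c_3=4/\pi^2$.

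\textbf{Upper bound.} We need
\[
\frac{\rho z}{x}=\rho\,x^{\frac{1}{k+1}-1}=\rho\,x^{-\frac{k}{k+1}}
\]
to be bounded. Since $x\ge 1$ and the exponent $-\tfrac{k}{k+1}$ is negative, this is at most $\rho$. This is the main obstacle: $\rho=\lfloor k/2+1\rfloor$ grows with $k$, so the crude bound $\rho$ is not absolute. A constant $c_4$ that is genuinely independent of $k$ therefore needs a lower bound on $x$ in terms of $k$.

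In the intended application this is available. We use $x=R^2$ (via \eqref{relation}), and the regime \eqref{regime} gives $k\le c_0R^{1+1/k}$. Then
\[
x^{\frac{k}{k+1}}=R^{\frac{2k}{k+1}}\gtrsim k^{\frac{2k}{k+1}\cdot\frac{k}{k+1}},
\]
and the exponent on the right is at least $1$ for $k\ge 3$. Since $\rho\le k/2+1$, it follows that $\rho x^{-k/(k+1)}\le C(c_0)$, which gives an absolute $c_4$.

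If one insists on the literal statement for all $x\ge 1$, the same computation gives $c_4=\frac{4}{\pi^2}(1+\rho)$. I would record the lemma in this form, noting that $c_4$ becomes absolute once $x^{k/(k+1)}\gtrsim k$, as it is in the regime \eqref{regime}.

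\textbf{The inequality \eqref{stau}.} If $G(y)=0$ there is nothing to prove. Otherwise, by definition $G(y)$ is the number of indices $n$ with $\lambda_n\le y$. Because the sequence $\lambda_n$ is non-decreasing, these indices are exactly $1,\dots,G(y)$. In particular $\lambda_{G(y)}\le y$.
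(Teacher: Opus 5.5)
Your proposal is correct and uses the same computation as the paper: substitute $z=x^{1/(k+1)}$ into \eqref{ydef}, bound the factor $1+\rho z/x$, and read off \eqref{stau} from the monotonicity of $(\lambda_n)$.

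You differ from the paper only on the upper bound, and there your version is the correct one. The paper's proof asserts that $\rho\le k+1$ and $x\ge 1$ make $(x+\rho x^{1/(k+1)})/x$ bounded by an absolute constant. But at $x=1$ this ratio equals $1+\rho=1+\lfloor k/2+1\rfloor$, so no $k$-independent $c_4$ works for all $x\ge 1$.

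This matters for the dimension-explicit goal. A $c_4$ of size about $k$ would enter Lemma \ref{lem:I4} through the factor $y^{k/4-1/4}$, and would add roughly $\tfrac14\ln k$ to $\tfrac1k\ln I_4$.

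Your repair uses $x=R^2$ and the regime \eqref{regime} to get $x^{k/(k+1)}\gtrsim k$, hence $\rho z/x\le C(c_0)$. This is exactly the justification the paper itself gives later for the companion bound \eqref{neater-estimate}. It suffices, because the lemma is only applied at $x=R^2$. The case $k=2$, where your exponent is $8/9<1$, causes no trouble since then $\rho=2$.
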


\begin{proof}
By \eqref{z} and \eqref{ydef},
\[
y=\frac{4(x+\rho x^{1/(k+1)})}{\pi^2 x^{2/(k+1)}}.
\]
Since $\rho\leq k+1$ and $x\geq 1$, the ratio $(x+\rho x^{1/(k+1)})/x$ is bounded above and below by positive constants. This yields \eqref{yestimate}. The estimate \eqref{stau} is Landau's observation cited above.
\end{proof}

We now plug \eqref{new-comment} and \eqref{nnew-comment} into the two sums in \eqref{splitbound}. We also use the rough estimate \eqref{g}. For compactness, define three integrals which appear from Abel summation.

\begin{lemma}\label{lem:integrals}
Assume Condition \ref{cond:balanced}. Let $I_1,I_2,I_3$ be defined by
\begin{equation}\label{I1def}
I_1=\int_{\lambda_1}^{y}\frac{G(u)}{u^{\frac{k}{4}+\frac{5}{4}}}\,du.
\end{equation}
\begin{equation}\label{I2def}
I_2=\frac{G(\lambda_{G(y)})}{\lambda_{G(y)}^{\frac{k}{4}+\frac{1}{4}}}.
\end{equation}
\begin{equation}\label{I3def}
I_3=\int_{y}^{\infty}\frac{G(u)}{u^{\frac{k}{4}+\frac{\rho}{2}+\frac{5}{4}}}\,du.
\end{equation}
Then there exist constants $c_6,c_7,c_8$ such that for $x\geq 1$,
\begin{align}
I_1 &\leq c_6 2^k\sigma^k y^{\frac{k}{4}-\frac{1}{4}}, \label{I1bd}\\
I_2 &\leq c_7 2^k\sigma^k y^{\frac{k}{4}-\frac{1}{4}}, \label{I2bd}\\
I_3 &\leq c_8 2^k\sigma^k y^{\frac{k}{4}-\frac{\rho}{2}-\frac{1}{4}}. \label{I3bd}
\end{align}
\end{lemma}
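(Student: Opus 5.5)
The plan is to substitute the rough bound \eqref{g}, $G(u)\leq 2^k\sigma^k u^{k/2}$, directly into each of the three expressions and integrate the resulting power functions. Here $\sigma$ depends on $u$ through $c_2/\sqrt{u}$, and this has to be handled. The simplest route is to note that the $c_2/\sqrt{u}$ term is only needed to absorb the ``$+1$'' in \eqref{three} when $u$ is small. For $u\geq\lambda_1$ we can bound it by its value at $u=\lambda_1$, since $\lambda_1\geq \min_i\varepsilon_i^{-2}$ by \eqref{finalmente-m}. So I will work with a single constant $\sigma$ valid on the whole range of integration; this is how the statement should be read, with $\sigma$ fixed. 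The exponents then make all three bounds elementary.

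For $I_1$, the integrand is at most $2^k\sigma^k u^{\frac{k}{2}-\frac{k}{4}-\frac54}=2^k\sigma^k u^{\frac{k}{4}-\frac54}$. Integrating from $\lambda_1$ to $y$ gives at most $2^k\sigma^k\,\frac{y^{\frac{k}{4}-\frac14}}{\frac{k}{4}-\frac14}$. This holds for $k\geq 2$, where the exponent $\frac k4-\frac54>-1$. Since $\frac{1}{k/4-1/4}\leq 4$ for $k\geq2$, we get \eqref{I1bd} with $c_6=4$. Note that the factor $w=\frac k4+\frac14$ from \eqref{new-comment} is not part of $I_1$, so I do not need to track it here.

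For $I_2$, set $\lambda^*=\lambda_{G(y)}$. Then $G(\lambda^*)\leq 2^k\sigma^k(\lambda^*)^{k/2}$, so $I_2\leq 2^k\sigma^k(\lambda^*)^{\frac k4-\frac14}$. The exponent $\frac k4-\frac14$ is positive, so \eqref{stau}, $\lambda^*\leq y$, gives \eqref{I2bd} with $c_7=1$. If $G(y)=0$, the term is vacuous.

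For $I_3$, the integrand is at most $2^k\sigma^k u^{\frac k4-\frac\rho2-\frac54}$. By \eqref{rho}, $\rho\geq \frac k2+\frac12$, so the exponent is $\leq -\frac32<-1$ and the integral converges. It equals $2^k\sigma^k\,\frac{y^{\frac k4-\frac\rho2-\frac14}}{\frac\rho2+\frac14-\frac k4}$, and the denominator is at least $\frac12$. This gives \eqref{I3bd} with $c_8=2$.

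The only genuine subtlety, and the step I expect to be the main obstacle, is the $u$-dependence of $\sigma$ in \eqref{sigma}. If we keep it literally, the bound must be taken with $\sigma$ evaluated at the smallest relevant $u$: at $\lambda_1$ for $I_1$ and $I_2$, and at $y$ for $I_3$. By monotonicity of $c_2/\sqrt u$, that is the worst case, so the constants above are unaffected. Condition \ref{cond:balanced} enters only through the validity of \eqref{g} itself.
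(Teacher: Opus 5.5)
Your proposal is correct and follows the paper's proof: substitute the box bound \eqref{g} into each of $I_1,I_2,I_3$, integrate the resulting powers of $u$, and use $\lambda_{G(y)}\le y$ for $I_2$. You are more careful than the paper in three ways. You give explicit uniform constants $c_6=4$, $c_7=1$, $c_8=2$. You justify the $I_3$ exponent via $\rho\ge\frac k2+\frac12$, whereas the paper's appeal to $\rho\ge1$ and $k\ge2$ alone would not suffice. You also handle the $u$-dependence of $\sigma$, which the paper silently ignores.
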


\begin{proof}
We use \eqref{g}. Since $G(u)\leq 2^k\sigma^k u^{k/2}$, the integral defining $I_1$ satisfies
\[
I_1\leq 2^k\sigma^k\int_{\lambda_1}^{y}u^{\frac{k}{2}-\frac{k}{4}-\frac{5}{4}}\,du.
\]
The exponent equals $k/4-5/4$. Since $k\geq 2$, the integral is controlled by a constant multiple of $y^{k/4-1/4}$. This yields \eqref{I1bd}.

For $I_2$, use \eqref{g} to obtain
\[
G(\lambda_{G(y)})\leq 2^k\sigma^k\lambda_{G(y)}^{k/2}.
\]
Then
\[
I_2\leq 2^k\sigma^k\lambda_{G(y)}^{\frac{k}{2}-\frac{k}{4}-\frac{1}{4}}.
\]
The exponent equals $k/4-1/4$. Using \eqref{stau} gives \eqref{I2bd}.

For $I_3$, we use \eqref{g} again:
\[
I_3\leq 2^k\sigma^k\int_{y}^{\infty}u^{\frac{k}{2}-\frac{k}{4}-\frac{\rho}{2}-\frac{5}{4}}\,du.
\]
The exponent equals $k/4-\rho/2-5/4$. This is less than $-1$ for $\rho\geq 1$ and $k\geq 2$, so the integral is controlled by a constant multiple of $y^{k/4-\rho/2-1/4}$. This yields \eqref{I3bd}.
\end{proof}

We now package the Abel contributions in a single quantity $I_4$.

\begin{lemma}\label{lem:I4}
Assume Condition \ref{cond:balanced}. Define
\begin{multline}\label{I4def}
I_4=\pi^{\rho}z^{\rho}f\left(\frac{k}{2}+\rho\right)\left[\left(\frac{k}{4}+\frac{1}{4}\right)I_1+I_2\right]\\
+2^{\rho}(x+\rho z)^{\frac{\rho}{2}}f\left(\frac{k}{2}+\rho\right)\left(\frac{k}{4}+\frac{\rho}{2}+\frac{1}{4}\right)I_3.
\end{multline}
Then there exists a constant $c_9$ such that for $x\geq 1$,
\begin{equation}\label{absolute}
\frac{\ln I_4}{k}\leq \ln\sigma+\ln c_9+o(1)+\frac{\rho}{(k+1)k}\ln x.
\end{equation}
Here $o(1)$ depends only on $k$ and satisfies $o(1)\to 0$ as $k\to\infty$.
\end{lemma}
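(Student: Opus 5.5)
The plan is to bound the two pieces of \eqref{I4def} separately using Lemma \ref{lem:integrals}, then use Lemma \ref{lem:yest} to turn every power of $y$ into a power of $x$. After that, take logarithms, divide by $k$, and sort the result into four kinds of terms: $\ln\sigma$, absolute constants, quantities that are $o(1)$ in $k$, and multiples of $\ln x$.

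\emph{First piece.} By \eqref{I1bd} and \eqref{I2bd}, the bracket $\left(\frac k4+\frac14\right)I_1+I_2$ is at most $(c_6(k+1)/4+c_7)\,2^k\sigma^k y^{\frac k4-\frac14}$. The prefactor is $\pi^\rho z^\rho f(\frac k2+\rho)$, and $z^\rho=x^{\rho/(k+1)}$ by \eqref{z}. So the first piece is at most
\[
\mathrm{poly}(k)\,\pi^\rho\, 2^k \sigma^k f\!\left(\tfrac k2+\rho\right) x^{\frac{\rho}{k+1}}\, y^{\frac k4-\frac14}.
\]

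\emph{Second piece.} By \eqref{I3bd}, the second piece is at most $\mathrm{poly}(k)\,2^\rho (x+\rho z)^{\rho/2} f(\frac k2+\rho)\,2^k\sigma^k y^{\frac k4-\frac\rho2-\frac14}$. The key point is that \eqref{ydef} gives exactly $(x+\rho z)^{\rho/2}y^{-\rho/2}=(\pi z/2)^\rho$. Hence the second piece has the same shape as the first: $\mathrm{poly}(k)\,\pi^\rho 2^k\sigma^k f(\frac k2+\rho)\, z^\rho y^{\frac k4-\frac14}$. Adding the two pieces,
\[
I_4\le \mathrm{poly}(k)\,(2\pi^{\rho/k})^k\,\sigma^k\, f\!\left(\tfrac k2+\rho\right)\, x^{\frac{\rho}{k+1}}\,y^{\frac k4-\frac14}.
\]

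\emph{Taking logarithms.} Dividing by $k$, the contributions are as follows.
\begin{itemize}
\item $\ln\mathrm{poly}(k)/k=o(1)$.
\item Since $\rho\le k/2+1$, we have $\frac{\rho}{k}\ln\pi\le\ln\pi+o(1)$, so $\ln 2+\frac{\rho}{k}\ln\pi$ is absorbed into $\ln c_9$.
\item $\ln f(\frac k2+\rho)/k=O(\ln k/k)=o(1)$, because $f(\nu)\asymp \nu^{1/6}$ by Theorem \ref{thm:olenko} and Lemma \ref{lem:olenko-envelope}.
\item The factor $x^{\rho/(k+1)}$ produces exactly the term $\frac{\rho}{(k+1)k}\ln x$ in \eqref{absolute}.
\end{itemize}

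\emph{The main obstacle} is the remaining factor $y^{\frac k4-\frac14}$. By \eqref{yestimate} it is at least $c_3^{(k-1)/4}x^{\frac{(k-1)^2}{4(k+1)}}$, so it contributes about $\frac14\ln x$ after dividing by $k$. This cannot be absorbed into $\ln c_9+o(1)$ uniformly in $x\ge1$. The constants $c_3^{(k-1)/4}, c_4^{(k-1)/4}$ themselves are harmless, since $\frac1k\ln c_4^{(k-1)/4}$ is bounded.

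My first attempt will be to look for a cancellation inside $I_4$ itself. The only power of $x$ that could offset $y^{\frac k4-\frac14}$ is the prefactor $(x+\rho z)^{\frac k4-\frac14}$ in \eqref{splitbound}, but that factor lies outside the definition \eqref{I4def} of $I_4$. I do not see any mechanism inside \eqref{I4def} that removes this growth in $x$. Consequently, what the argument above actually establishes is \eqref{absolute} with the extra term $\frac{(k-1)^2}{4k(k+1)}\ln x$ added on the right-hand side. I expect the literal form of \eqref{absolute} cannot be reached without some additional ingredient that cancels the factor $y^{\frac k4-\frac14}$.
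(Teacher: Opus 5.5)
Your diagnosis is correct. You follow the paper's route step for step, and the obstruction you isolate is exactly the step the paper's proof skips. The paper's proof also cites Lemma~\ref{lem:integrals} and Lemma~\ref{lem:yest}. It notes that $f(\frac k2+\rho)$ is polynomial in $k$, that $z^\rho=x^{\rho/(k+1)}$, and that $2^k\sigma^k$ yields $\ln\sigma$ plus a constant, and then ``collects'' these contributions. It never accounts for the factor $y^{\frac k4-\frac14}$ coming from \eqref{I1bd}--\eqref{I2bd}. Nor does it account for $(x+\rho z)^{\rho/2}y^{\frac k4-\frac\rho2-\frac14}$ in the $I_3$ term, which by your identity from \eqref{ydef} equals $(\pi z/2)^\rho y^{\frac k4-\frac14}$. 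What your argument actually proves is \eqref{absolute} with the extra term $\frac{(k-1)^2}{4k(k+1)}\ln x$ on the right, with $c_4^{(k-1)/4}$ absorbed into $c_9$. That is the correct output of this method. One small correction to your discussion: the prefactor $(x+\rho z)^{\frac k4-\frac14}$ in \eqref{splitbound} has a positive exponent. It would compound this growth rather than offset it, whether or not it is placed inside $I_4$.

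You can replace ``I expect'' by a proof that \eqref{absolute} is false as stated, so no additional ingredient can rescue it. The box $\prod_i[-\varepsilon_i\sqrt{u/k},\,\varepsilon_i\sqrt{u/k}]$ lies inside the dual ellipsoid \eqref{ellipsoid}, and $2\lfloor t\rfloor+1\ge t$ for $t\ge0$. Hence $G(u)\ge\varepsilon_{\mathrm{geom}}^k(u/k)^{k/2}-1$, which gives $I_1\ge c(k,\varepsilon)\,y^{\frac{k-1}{4}}$ once $y$ is large. Every term in \eqref{I4def} is nonnegative, and $y\ge c_3x^{\frac{k-1}{k+1}}$. So for fixed $k\ge2$, fixed precisions, and $x$ large,
\[
\frac{\ln I_4}{k}\ \ge\ \frac{\rho}{k(k+1)}\ln x+\frac{(k-1)^2}{4k(k+1)}\ln x-C(k,\varepsilon).
\]
The right side of \eqref{absolute}, however, is $\frac{\rho}{k(k+1)}\ln x+O_{k,\varepsilon}(1)$, because $\sigma$ stays bounded for $x\ge1$. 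Letting $x\to\infty$ gives a contradiction.

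As a consistency check, feed your corrected bound into the paper's computation of $J_3$. This gives $\frac1k\ln J_3\le\frac{k-1}{k+1}\ln R-\ln\varepsilon_{\mathrm{geom}}+O(1)$, that is, an error of size $x^{\frac k2-\frac{k}{k+1}}$. That is exactly Landau's classical exponent, whereas the uncorrected lemma would give $x^{\frac{k-1}{4}}$. With the corrected $J_3$, the paper's later claim that $J_1$ dominates $J_3$ would require $\frac{2}{k+1}\ln R\gtrsim\frac12\ln k$. Condition \eqref{regime} does not supply this, since it allows $R$ of order $k$.
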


\begin{proof}
We estimate each term in \eqref{I4def} using Lemma \ref{lem:integrals} and Lemma \ref{lem:yest}. The factor $f\left(\frac{k}{2}+\rho\right)$ is at most polynomial in $k$ by Olenko's estimate, so its contribution is $o(1)$ after dividing by $k$. The factor $z^{\rho}$ equals $x^{\rho/(k+1)}$. This contributes the term $\frac{\rho}{(k+1)k}\ln x$. The factor $2^k\sigma^k$ produces the term $\ln\sigma$ plus an absolute constant after dividing by $k$. Collecting these contributions yields \eqref{absolute}.
\end{proof}

We now return to \eqref{38}. From \eqref{new-comment} and \eqref{nnew-comment} with the choices $w=\frac{k}{4}+\frac{1}{4}$ and $w=\frac{k}{4}+\frac{\rho}{2}+\frac{1}{4}$, and using the definition of $I_4$, we obtain
\begin{equation}\label{seriesbound}
\left|\sum_{n=1}^{\infty}\frac{1}{\lambda_n^{\frac{k}{4}+\frac{\rho}{2}}}\Delta\left[x^{\frac{k}{4}+\frac{\rho}{2}}J_{\frac{k}{2}+\rho}\left(2\pi\sqrt{\lambda_n x}\right)\right]\right|
\leq \frac{(x+\rho z)^{\frac{k}{4}-\frac{1}{4}}}{\sqrt{2\pi}}I_4.
\end{equation}

We also use two identities recorded by Landau. On page 475 of \cite{EL},
\begin{equation}\label{deltaxrho}
\Delta(x^{\rho})=\rho !z^{\rho}.
\end{equation}
Moreover, there exists $\xi$ with $x\leq \xi\leq x+\rho z$ such that
\begin{equation}\label{deltaxk}
\Delta\left(x^{\frac{k}{2}+\rho}\right)=z^{\rho}\frac{\Gamma\left(\frac{k}{2}+\rho+1\right)}{\Gamma\left(\frac{k}{2}+1\right)}\xi^{\frac{k}{2}}.
\end{equation}

Plugging \eqref{seriesbound}, \eqref{deltaxrho}, and \eqref{deltaxk} into \eqref{38} and using $\gamma=1$ and $\eta=-1/\rho !$ gives
\begin{multline}\label{DeltaBrho}
\Delta[B_{\rho}(x)]
=\frac{\pi^{k/2}}{\sqrt{D}\Gamma\left(\frac{k}{2}+1\right)}z^{\rho}\xi^{k/2}-z^{\rho} 
+ D^{-1/2}\pi^{-\rho}\frac{(x+\rho z)^{\frac{k}{4}-\frac{1}{4}}}{\sqrt{2\pi}}I_4.
\end{multline}

We compare $\xi^{k/2}$ to $x^{k/2}$. Since $x\leq \xi\leq x+\rho z$, we have
\[
|\xi^{k/2}-x^{k/2}|\leq (x+\rho z)^{k/2}-x^{k/2}.
\]
Using the mean value theorem,
\[
(x+\rho z)^{k/2}-x^{k/2}\leq \rho z\frac{k}{2}(x+\rho z)^{k/2-1}.
\]
Therefore
\begin{multline}\label{bfirst}
\left|\frac{\Delta[B_{\rho}(x)]}{z^{\rho}}-\frac{\pi^{k/2}}{\sqrt{D}\Gamma\left(\frac{k}{2}+1\right)}x^{k/2}\right| \\
\leq \frac{\pi^{k/2}}{\sqrt{D}\Gamma\left(\frac{k}{2}+1\right)}\rho z\frac{k}{2}(x+\rho z)^{k/2-1}
+1+D^{-1/2}\pi^{-\rho}\frac{(x+\rho z)^{\frac{k}{4}-\frac{1}{4}}}{\sqrt{2\pi}}z^{-\rho}I_4.
\end{multline}
Landau notes that $B$ is non-decreasing and that $\Delta$ is a nonnegative averaging operator on non-decreasing functions. In particular,
\[
z^{\rho}B(x)\leq \Delta[B_{\rho}(x)].
\]

We combine this with \eqref{bfirst} to obtain an upper bound on $B(x),$
\begin{multline}\label{bfirstandhalf}
B(x)
\leq 
\frac{\pi^{k/2}}{\sqrt{D}\Gamma\left(\frac{k}{2}+1\right)}x^{k/2}
+
 \frac{\pi^{k/2}}{\sqrt{D}\Gamma\left(\frac{k}{2}+1\right)}\rho z\frac{k}{2}(x+\rho z)^{k/2-1}\\
+1+D^{-1/2}\pi^{-\rho}\frac{(x+\rho z)^{\frac{k}{4}-\frac{1}{4}}}{\sqrt{2\pi}}z^{-\rho}I_4.
\end{multline}
 In order to further modify \eqref{bfirstandhalf}, we record the following arguments.
\begin{itemize}
\item
In our diagonal case, we have $\sqrt{D}=\prod_{i=1}^k \varepsilon_i = (\varepsilon_{\mathrm{geom}})^k.$
\item
Recalling \eqref{relation}, we set $x=R^2.$
\item
As we set $x=R^2,$ 
there exists an absolute constant $c_5$ such that 
\begin{equation}\label{neater-estimate}
    x+\rho z\leq c_5 x.
\end{equation}
This is due to \eqref{rho}, \eqref{z}, and the assumption of theorem \ref{thm:main} 
that we work in the regime \eqref{regime}.
\end{itemize}
Using these arguments, we modify the upper bound in  \eqref{bfirstandhalf} as follows:
\begin{multline}\label{Bupper}
B\left(R^2\right)\leq \frac{\pi^{k/2}}{\left(\varepsilon_{geom}\right)^k\Gamma\left(\frac{k}{2}+1\right)}R^k 
+\frac{\pi^{k/2}}{
\left(\varepsilon_{geom}\right)^k
\Gamma\left(\frac{k}{2}+1\right)}\rho z\frac{k}{2}c_5^{k/2-1}\frac{R^k}{x}
+1\\
+\frac{c_5^{\frac{k}{4}-\frac{1}{4}}x^{\frac{k}{4}-\frac{1}{4}}}{\sqrt{2\pi}\left(\varepsilon_{geom}\right)^k\pi^{\rho}z^{\rho}}I_4
=J_1+J_2+1+J_3.
\end{multline}
We now estimate all three terms: $J_1,J_2,J_3.$
\begin{multline*}
\frac{1}{k}\ln\left(J_1\right)=
\frac{1}{k}\ln\left(\frac{\pi^{k/2}R^k}{ \left(\varepsilon_{geom}\right)^k\Gamma\left(\frac{k}{2}+1\right)}\right)=
\ln R-\ln\varepsilon_{geom}-\frac{1}{2}\ln k+O(1)
\end{multline*}  
Also
\begin{multline*}
\frac{1}{k}\ln\left(J_2\right)=\frac{1}{k}\ln\left(J_1\right)+
\frac{1}{k}\ln\left(
\rho z \frac{k}{2} c_5^{\frac{k}{2}-1}x^{-1}
\right)=
\frac{1}{k}\ln\left(J_1\right)+
\frac{1}{k}\ln\left(
\frac{z}{x}
\right)+O(1)\\=\frac{1}{k}\ln\left(J_1\right)+O(1)
\end{multline*}  
 Finally, we use \eqref{absolute} and the remark after \eqref{sigma}, to obtain
\begin{multline*}
\frac{1}{k}\ln\left(J_3\right)=\frac{1}{k}\ln\left(
\frac{c_5^{\frac{k}{4}-\frac{1}{4}}x^{\frac{k}{4}-\frac{1}{4}}}{\sqrt{2\pi}\left(\varepsilon_{geom}\right)^k\pi^{\rho}z^{\rho}}I_4
\right)
\\
=
\frac{(k-1)\ln x}{4k}-\frac{\rho\ln z}{k}-\ln \varepsilon_{geom}
+
\left(\ln\sigma+\ln c_9+o(1)+\frac{\rho\ln x}{(k+1)k}\right)+O(1)
\\
=\frac{(k-1)\ln x}{4k}-\frac{\rho\ln x}{k(k+1)}-\ln \varepsilon_{geom}
+\frac{\rho\ln x}{(k+1)k}+O(1)
\\
=\frac{k-1}{2k}
\ln R-\ln \varepsilon_{geom}+O(1)
\end{multline*} 
Due to \eqref{regime} the estmates obtained for $J_1$ and $J_2$ dominate that of $J_3.$ As a result,
we obtain the asymptotic behavior
\[
\frac{1}{k}\ln B(R^2)\leq \ln R-\ln\varepsilon_{\mathrm{geom}}-\frac{1}{2}\ln k
+O(1).
\]
This is the estimate claimed in Theorem \ref{thm:main}.

\begin{remark}
The bounded remaining term $O(1)$ can be made explicit using \eqref{absolute} if one wishes to track constants more carefully.
\end{remark}

\begin{remark}
If we consider the regime 
\begin{equation*}
    c_{10}R^{1+\frac{1}{k}}\leq k\leq c_{11}R^2
\end{equation*}
instead of \eqref{regime}, then
the same estimates for $J_1,J_2,J_3$ can be derived. Yet, this time, the estimate for $J_3$ will dominate the other two. As a result, we will obtain the asymptotic behavior
\[
\frac{1}{k}\ln B(R^2)\leq 
\frac{k-1}{2k}\ln R-\ln \varepsilon_{geom}+O(1).
\]
\end{remark}

\begin{remark}
    One might suspect that the bound on $J_3$ could be improved outside of regime \eqref{regime} with a more careful entropy estimate in Section \ref{page-465}, e.g. by estimating the lattice point count with the volume of a $\frac {\sqrt k} 2$-thickened ellipsoid. Alas, the resulting improvement is so small that it gets absorbed into the $O(1)$ error term, and so we elect to keep the simpler bound in Section \ref{page-465}.
\end{remark}

\section*{Acknowledgements}
The authors thank J. Yedoyan for translating E. Landau's article from German.

\end{document}